\newtheorem{theo}{Theorem}
\newtheorem{lem}{Lemma}
\newtheorem{defi}{Definition}
\newtheorem{propo}{Proposition}
\newtheorem{remark}{Remark}
\newcommand{\rref}[1]{(\ref{#1})}
\newcommand{\scal}{\mathcal{S}}
\newcommand{\pcal}{\mathcal{P}}
\newcommand{\abs}[1]{\left| #1 \right|}
\newcommand{\reels}{\mathbb{R}}
\newcommand{\sign}{\text{sign}}
\title{\Large\bf
Global stabilization of multiple integrators by a bounded feedback with constraints on its successive derivatives
}
\author{Jonathan Laporte, Antoine Chaillet and Yacine Chitour 
\thanks{This research was partially supported by a public grant overseen by the French ANR as part of the “Investissements d'Avenir” program, through the iCODE institute, research project funded by the IDEX Paris-Saclay, ANR-11-IDEX-0003-02.}
\thanks{J. Laporte, A. Chaillet  and  Y. Chitour  are with L2S - Univ. Paris Sud - CentraleSup\'elec. 3, rue Joliot-Curie. 91192 - Gif sur Yvette, France.
        {\tt\small jonathan.laporte, antoine.chaillet, yacine.chitour@l2s.centralesupelec.fr}}%
}
\begin{document}
\newcommand{\AC}[1]{\textbf{\textcolor{red}{#1}}}
\newcommand{\JL}[1]{\textbf{\textcolor{green}{#1}}}
\date{}

\maketitle

\begin{abstract}
In this paper, we address the global stabilization of chains of integrators by means of a bounded static feedback law whose $p$ first time derivatives are bounded. Our construction is based on the technique of nested saturations introduced by Teel.  We show that the control amplitude and the maximum value of its $p$ first derivatives can be imposed below any prescribed values. Our results are illustrated by the stabilization of the third order integrator on the feedback and its first two derivatives.
\end{abstract}

\section{Introduction}

Actuator constraints is an important practical issue in control applications since it is a possible source of instability or performance degradation. Global stabilization of linear time-invariant (LTI) systems with actuator saturations (or bounded inputs) can be achieved if and only the uncontrolled linear system has no eigenvalues with positive real part and is stabilizable \cite{SSY}. 

Among those systems, chains of integrators have received specific attention. Saturation of a linear feedback is not globally stabilizing as soon as the integrator chain is of dimension greater than or equal to three \cite{FULLER69,SY91}. In \cite{Teel92} a globally stabilizing feedback is constructed using {\it nested saturations} for the multiple integrator. This construction has been extended to the general case in \cite{SSY}, in which a family of stabilizing feedback laws is proposed as a linear combination of saturation functions. In \cite{Marchand2003b} and \cite{Marchand2005}, the issue of performance of these bounded feedbacks is investigated for multiple integrators and some improvements are achieved by using 
variable levels of saturation. A gain scheduled feedback was proposed in \cite{Megretski96bibooutput} to ensure robustness to some classes of bounded disturbances. Global practical stabilization has been achieved in \cite{Gayaka:2011bm} in the presence of bounded actuator disturbances using a backstepping procedure.

Technological considerations may not only lead to a limited amplitude of the applied control law, but also to a limited reactivity. This problem is known as rate saturation \cite{lauvdal97} and corresponds to the situation when the signal delivered by the actuator cannot have too fast variations.
This issue has been addressed for instance in \cite{SilvaTarbouch03}-\cite{Freeman:1998tp}. In \cite{SilvaTarbouch03, Galeani}, regional stability is ensured through LMI-based conditions. In \cite{lauvdal97}, a gain scheduling technique is used to ensure semi-global stabilization of integrator chains. In \cite{saberi2012}, semi-global stabilization is obtained via low-gain feedback or low-and-high-gain feedback. In \cite{Freeman:1998tp}, a backstepping procedure is proposed to globally stabilize a nonlinear system with a control law whose amplitude and first derivative are bounded independently of the initial state. 

In this paper, we deepen the investigations on global stabilization of LTI systems subject to bounded actuation with rate constraints. We consider rate constraints that affect only the first derivative of the control signal, but also its successive $p$ first derivatives, where $p$ denotes an arbitrary positive integer. Focusing on chains of integrators of arbitrary dimension, we propose a static feedback law that globally stabilizes chains of integrators, and whose magnitude and $p$ first derivatives are below arbitrarily prescribed values at all times. Our control law is based on the nested saturations introduced in \cite{Teel92}. We rely on specific saturation functions, which are linear in a neighborhood of the origin and constant for large values of their argument. 

This paper is organized as follows. In Section \ref{section:stat_main_res}, we provide definitions and state our main result. The proof of the main result is given in Section \ref{sec:proof_main_result} based on several technical lemmas. In Section \ref{sec:simu}, we test the efficiency of the proposed control law via numerical simulations on the third order integrator, with a feedback whose magnitude and two first derivatives are bounded by prescribed values. We provide some conclusions and possible future extensions in Section \ref{sec: conclusion}.

\vspace{3mm}
\textbf{Notations.}
The function $\sign:\reels \backslash \lbrace{0\rbrace}\to \mathbb R$ is defined as $\sign(r) := r / \abs{r}$. Given a set $I\subset \mathbb R$ and a constant $a\in\mathbb R$, we let $I_{\geq a}:=\left\{x\in I\,:\, x\geq a\right\}$.  Given $k\in\mathbb N$ and $m\in\mathbb N_{\geq 1}$, we say that a function $f : \reels^m \rightarrow \reels$ is of class $C^{k}(\reels^m , \reels)$ if its differentials up to order $k$ exist and are continuous, and we use $f^{(k)}$ to denote the $k$-th order derivative of $f$. By convention, $f^{(0)}:=f$. The factorial of $k$ is denoted by $k!$. We define $\llbracket m , k \rrbracket:=\left\{n\in\mathbb N\,:\, n\in[m,k]\right\}$. We use $\reels^{m,m}$ to denote the set of $m\times m$ matrices with real coefficients.  $J_m\in \reels^{m,m}$ denotes the $m$-th Jordan block, i.e. the $m \times m$ matrix given by $(J_m)_{i,j} =1$ if $i=j-1$ and zero otherwise. For each $i\in\llbracket 1 , m \rrbracket$, $e_i\in \reels^m$ refers to the column vector with coordinates equal to zero except the $i$-th one equal to one.


\section{Statement of the main result}
\label{section:stat_main_res}

In this section we present our main result on the stabilization of the multiple integrators with a control law whose magnitude and $p$ first derivatives are bounded by prescribed constants. Given $n\in\mathbb N_{\geq 1}$, the multiple integrator of length $n$ is given by
\begin{equation}
\label{mult_int}
  \left\{
      \begin{array}{r c  l}
        \dot{x}_1& = & x_2, \\
        &  \vdots& \\
        \dot{x}_{n-1} & =&  x_{n}, \\
        \dot{x}_n&= & u.
      \end{array}
    \right.
\end{equation}
Letting $x:=(x_1,\ldots,x_n)$, System \rref{mult_int} can be compactly written as 
$$
\dot{x}= J_n x +e_n u.
$$
In order to make the objectives of this paper more precise, we start by introducing the notion of $p$-bounded feedback law by $(R_j)_{0 \leq j \leq p}$ for System \rref{mult_int}, which will be used all along the document.

\vspace{3mm}
\begin{defi}
Given $n\in\mathbb N_{\geq 1}$ and $p\in\mathbb N$, let $(R_j)_{0 \leq j \leq p}$ denote a family of positive constants. We say that $\nu : \reels^n \rightarrow \reels $ is a \textit{$p$-bounded feedback law by $(R_j)_{0 \leq j \leq p}$  for System \rref{mult_int}} if, for every trajectory of the closed loop system $\dot{x}= J_n x +e_n \nu(x)$, the time function $u : \reels_{\geq 0} \rightarrow  \reels $ defined by $u(t) = \nu (x(t)) $ for all $t \geq 0$ satisfies, for all $j \in \llbracket 1 , p \rrbracket$,
\begin{equation*}
\label{est:def1}
\sup\limits_{t\geq 0} \left\lbrace \abs{u^{(j)}(t)} \right\rbrace \leq R_j .
\end{equation*}
\end{defi}
\vspace{3mm}

Based on this definition, we can restate our stabilization problem as follows. Given $p\in\mathbb N$ and a set of positive real numbers $(R_j)_{0 \leq j \leq p}$, our aim is to design a feedback law $\nu $  which is a $p$-bounded feedback law by $(R_j)_{0 \leq j \leq p}$ for System \rref{mult_int} such that the origin of the closed-loop system $\dot{x}= J_n x +e_n \nu(x)$ is globally asymptotically stable. The case $p=0$ corresponds to global stabilization with bounded state feedback and has been addressed in e.g. \cite{Teel92,Marchand2003b,Marchand2005}. The case $p=1$ corresponds to global stabilization with bounded state feedback and limited rate, in the line of e.g. \cite{SilvaTarbouch03, Galeani, lauvdal97, saberi2012, Freeman:1998tp}. Inspired by \cite{Teel92}, our design for an arbitrary order $p$ is based on a nested saturations feedback, where saturations belong to the following class of functions. 

\vspace{3mm}
\begin{defi}
\label{def:S(p)}
Given $p\in\mathbb N$, $\scal (p)$ is defined as the set of all functions $ \sigma $ of class $ C^{p}(\reels , \reels)$, which are odd, and such that there exists positive constants  $\alpha$, $L$, $\sigma^{max}$ and $S$ satisfying, for all $r \in \reels$,
\begin{itemize}
\item[(i)] $r \sigma (r) > 0$, when $r \neq 0$,
\item[(ii)] $\sigma (r)= \alpha r $, when $\abs{r} \leq L$,
\item[(iii)]  $\abs{\sigma (r)} =\sigma^{max} $, when $\abs{r} \geq S$.
\end{itemize}
In the sequel, we associate with every \emph{$\sigma \in \scal(p)$ the $4$-tuple $(\sigma^{max},L,S,\alpha)$}.
\end{defi}

\vspace{3mm}

The constants $\sigma^{max}$, $L$, $\alpha$, and $S$ will be extensively used throughout the paper. Figure \ref{fig:S(p)} helps fixing the ideas. $\sigma^{max}$ represents the saturation level, meaning the maximum value that can be reached by the saturation. $L$ denotes the linearity threshold: for all $|r|\leq L$, the saturation behaves like a purely linear gain. $\alpha$ is the value of this gain that is, the slope of the saturation in the linear region. $S$ represents the saturation threshold: for all $|r|\geq S$, the function saturates and takes a single value (either $-\sigma^{max}$ or $\sigma^{max}$). Notice that it necessarily holds that $S \geq L$  and the equality may only hold when $p = 0$. We also stress that the successive derivatives up to order $p$ of an element of $\scal(p)$ are bounded. An example of such function is given in Section \ref{sec:simu} for $p=2$. 

\begin{figure}[thpb]
      \centering
      \includegraphics[scale=0.9]{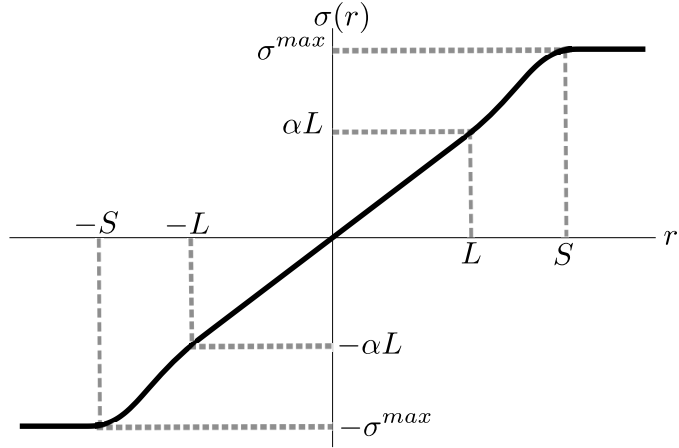}
      \caption{A typical example of a $\scal(p)$ saturation function with constants $(\sigma^{max},L,S,\alpha)$.}
      \label{fig:S(p)}
\end{figure}

Based on these two definitions, we are now ready to present our main result, which establishes that global stabilization on any chain of integrators by bounded feedback with constrained $p$ first derivatives can always be achieved by a particular choice of nested saturations.

\vspace{3mm}
\begin{theo}
\label{Main_res}
Given $n\in\mathbb N_{\geq 1}$ and $p\in\mathbb N$, let $(R_j)_{0 \leq j \leq p}$ be a family of positive constants. For every set of saturation functions  $ \sigma_1 , \ldots , \sigma_n \in \scal(p)$, there exists  vectors $ k_1 , \ldots , k_n $ in $\reels^n$, and positive constants  $ a_1 , \ldots , a_n $ such that the feedback law $\nu$ defined, for each $x\in\mathbb R^n$, as
\begin{align}
\label{nested_com_th}
\nu (x)= - a_n \sigma_n \Big( k_n^T x+a_{n-1} \sigma_{n-1}\big(k_{n-1}^T x +\ldots  + a_1  \sigma_1 (k_1^T x )\big) \ldots \Big)
\end{align}
is a $p$-bounded feedback law by $(R_j)_{0 \leq j \leq p}$ for System \rref{mult_int}, and the origin of the closed-loop system $\dot{x}= J_n x +e_n^T \nu(x)$  is globally asymptotically stable.
\end{theo}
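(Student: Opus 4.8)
The plan is to follow Teel's nested-saturation philosophy while simultaneously tracking the first $p$ time derivatives of the control along the closed-loop trajectories. Denote the successive arguments of the nested saturations by $s_1 := k_1^T x$ and $s_i := k_i^T x + a_{i-1}\sigma_{i-1}(s_{i-1})$ for $i\in\llbracket 2, n\rrbracket$, so that $\nu(x) = -a_n\sigma_n(s_n)$; let $(\sigma_i^{max}, L_i, S_i, \alpha_i)$ be the $4$-tuple associated with $\sigma_i$. The first task is to choose the vectors $k_1,\ldots,k_n$ with a triangular shift structure: I would impose $k_n = c_n e_n$, so that $k_n^T J_n = 0$, together with $k_{i+1}^T = k_i^T J_n$ for $i\in\llbracket 1, n-1\rrbracket$, which forces $k_1 = c_n e_1 + \sum_{j\geq 2}(k_1)_j e_j$ and leaves the entries $(k_1)_j$, $j\geq 2$, as free tuning parameters. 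The gain of this choice is twofold. On the one hand, it endows the arguments $s_i$ with a lower-triangular cascade dynamics. On the other hand, since $k_i^T J_n x = k_{i+1}^T x = s_{i+1} - a_i\sigma_i(s_i)$ for $i<n$ and $k_n^T J_n x = 0$, the otherwise unbounded terms $k_i^T J_n x$ become expressible through the outer arguments.

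Next I would fix the amplitudes $a_1,\ldots,a_n$ recursively, with a strong scale separation $a_n\sigma_n^{max}\gg a_{n-1}\sigma_{n-1}^{max}\gg\cdots\gg a_1\sigma_1^{max}$ and globally small. The amplitude bound $R_0$ is immediate, since $\abs{\nu(x)}\leq a_n\sigma_n^{max}$. For the derivatives I would differentiate $u(t) = \nu(x(t))$ up to order $p$ along the closed loop, using $\dot x = J_n x + e_n u$. Each $u^{(j)}$ is a finite sum of products of the constants $a_i$, of derivatives $\sigma_i^{(k)}(s_i)$ with $k\leq j\leq p$ (all bounded because $\sigma_i\in\scal(p)$), and of factors equal to $k_i^T J_n x$ or to $u$ and its already-bounded lower derivatives. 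The crucial observation is that a product containing $\sigma_i^{(k)}(s_i)$ with $k\geq 1$ vanishes unless $\abs{s_i}\leq S_i$, in which case the structural choice of the $k_i$ bounds the accompanying factor $k_i^T J_n x = s_{i+1} - a_i\sigma_i(s_i)$ by a constant. Hence each $u^{(j)}$ is bounded by an expression that tends to zero as the $a_i$ shrink, so the prescribed bounds $\abs{u^{(j)}}\leq R_j$ can be enforced. This is precisely where membership in $\scal(p)$, i.e. $C^p$ regularity with bounded derivatives, is used.

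To establish global asymptotic stability I would argue level by level in the cascade coordinates $s_n, s_{n-1},\ldots, s_1$, as in Teel's argument. Using the scale separation, I would show that the outermost argument $s_n$ enters the linearity region $[-L_n, L_n]$ of $\sigma_n$ in finite time and remains there; once $\sigma_n(s_n) = \alpha_n s_n$, the reduced dynamics drive $s_{n-1}$ into its own linearity region, and so on down to $s_1$. After a finite time all saturations operate in their linear regimes, so the closed loop coincides with a linear system whose matrix is rendered Hurwitz by the free entries of $k_1$ and the amplitudes $a_i$; global attractivity together with a standard forward-invariance argument then yields global asymptotic stability of the origin.

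The main obstacle I anticipate is the derivative-bounding step: differentiating the nested saturations $p$ times produces a rapidly growing number of terms, and one must verify, uniformly in time and over all trajectories, that every term carrying an unbounded factor $k_i^T J_n x$ is multiplied by some $\sigma_i^{(k)}(s_i)$ that confines $s_i$ to a compact set on which that factor is bounded. Reconciling this with global asymptotic stability is delicate, because small amplitudes $a_i$ help the derivative estimates but weaken the control authority; the scale-separation choice of the $a_i$ is what lets both requirements hold at once. I would expect the bulk of the technical lemmas announced in the introduction to be devoted to making this bookkeeping rigorous.
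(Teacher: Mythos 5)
Your overall philosophy matches the paper's (nested saturations, a Teel-type cascade for stability, and the observation that derivative terms either vanish in the saturation zones or come with compactly confined arguments — the paper formalizes exactly this bookkeeping via Fa\`a Di Bruno's formula and Bell polynomials in its Lemma \ref{lem:techn} and Proposition \ref{prop:2}). However, two of your concrete choices create genuine gaps. First, the gain structure $k_n=c_ne_n$, $k_{i+1}^T=k_i^TJ_n$ does not support the level-by-level stability argument beyond the two outermost levels. Writing $\zeta_i:=k_i^Tx$, your structure gives $\dot\zeta_i=\zeta_{i+1}+(k_1)_{n-i+1}u$, and once $\sigma_n$ and $\sigma_{n-1}$ operate linearly one has $u=-a_n\alpha_{\sigma_n}\bigl(\zeta_n+a_{n-1}\alpha_{\sigma_{n-1}}(\zeta_{n-1}+a_{n-2}\sigma_{n-2}(s_{n-2}))\bigr)$. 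You can choose $(k_1)_2$ to cancel the $\zeta_n$-term in $\dot\zeta_{n-1}$ and $(k_1)_3$ to cancel the $\zeta_{n-1}$-term in $\dot\zeta_{n-2}$, but then $\dot\zeta_{n-2}$ retains an uncancelled term of order $\abs{\zeta_n}/(a_{n-1}\alpha_{\sigma_{n-1}})$; since the cascade only guarantees $\abs{\zeta_n}$ of order $a_{n-1}\sigma_{n-1}^{max}$ at best, this perturbation is of order $\sigma_{n-1}^{max}/\alpha_{\sigma_{n-1}}$, a \emph{constant}, while the control authority at that level, $a_{n-2}\sigma_{n-2}^{max}$, shrinks under your scale separation — so the ``and so on down to $s_1$'' step fails. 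The paper avoids this precisely through the binomially weighted change of coordinates $y=Hx$ (the $k_{n-i}$ carry factors $\binom{i}{k}(\alpha_{\tilde\mu}/\lambda)^k$ and per-level dilations $L_{\sigma_i}/L_{\mu_i}$, which your constraint $k_i^T=k_1^TJ_n^{i-1}$ forbids), yielding $\dot y_i=\alpha_{\mu_n}\sum_{l>i}y_l+u$ and \emph{exact} cancellation at every level of the cascade, as in \rref{S1c2} and Proposition \ref{prop:stab_S1}.

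Second, your mechanism for enforcing the derivative bounds — shrinking all amplitudes $a_i$ — is not shown to be compatible with stability, and the tension you flag at the end is real, not just technical. Maintaining the cascade cancellations (or even the terminal Hurwitz linear loop) with your shift-structured gains forces entries of $k_1$ of order $1/\prod_l a_l\alpha_{\sigma_l}$, which blow up as the $a_i\to 0$ and re-enter your derivative estimates through the factors $k_i^T\dot x=\zeta_{i+1}+(k_i)_nu$; hence it is not established that the bounds on $u^{(j)}$ actually tend to zero along an admissible parameter family. The paper decouples the two requirements differently: \emph{no} amplitude is shrunk (the outer one is pinned at $\mu_n^{max}=R_0$, the inner ones fixed by \rref{choix_cst_1}--\rref{choix_cst_2} so that \rref{cond} holds for every $\lambda$), and only the outermost saturation is dilated horizontally by a single parameter $\lambda$ ($L_{\mu_n}=\lambda$, slope $\alpha_{\mu_n}=\alpha_{\tilde\mu}/\lambda$). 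This slows the transformed dynamics so the state bounds $Y_{i,j},Z_{i,j}$ of Proposition \ref{prop:2} stay bounded in $\lambda$, while $\overline{\mu}_{n,q}=\tilde\mu_{n,q}/\lambda^q$, giving $\sup_{t\geq 0}\abs{u^{(j)}(t)}\leq \lambda^{-1}P(\lambda^{-1})\to 0$, with global asymptotic stability holding for every $\lambda\geq 1$ — so there is no trade-off left to resolve. A smaller but real slip in your bounding step: when a term carries $\sigma_i^{(k)}(s_i)$ with $k\geq 1$, the confinement $\abs{s_i}\leq S_i$ bounds $a_i\sigma_i(s_i)$ but \emph{not} $s_{i+1}$ in $k_i^TJ_nx=s_{i+1}-a_i\sigma_i(s_i)$; boundedness of $s_{i+1}$ must come from the accompanying outer factors $\sigma_{i+1}^{(\cdot)}(s_{i+1}),\ldots,\sigma_n^{(\cdot)}(s_n)$, i.e.\ the conditioning must be nested from the outside in — exactly the role of the sets $E_i$, $I_i$ with $I_1\subseteq\cdots\subseteq I_n$ and of the hypothesis $E\subseteq F$ in Lemma \ref{lem:techn}.
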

\vspace{3mm}

The proof of this result is given in Section \ref{sec:proof_main_result}. It provides above theorem we give below also provides an explicit choice of the gain vectors $ k_1, \ldots, k_n$ and constants $a_1,\ldots a_n $.

\vspace{3mm}
\begin{remark}
In \cite{SSY}, a stabilizing feedback law was constructed using linear combinations of saturated functions. That feedback with saturation functions in $\scal(p)$ cannot be a $p$-bounded feedback for System \rref{mult_int}. To see this, consider the multi-integrator of length $2$, given by $\dot{x}_1=x_2, \: \dot{x}_2=u$. Any stabilizing feedback using a linear combination of saturation functions in $\scal (p)$ is given by $\nu (x_1,x_2) = - a  \sigma_1(b x_2) - c \sigma_2(d (x_2 + x_1))$, where the constants $a$, $b$, $c$, and $d$ are chosen to insure stability of the closed-loop system according to \cite{SSY}. Let $u(t)= \nu( x_1(t),x_2(t))$ for all $t \geq 0$. A straightforward computation yields $\dot{u}(t) =- a b \sigma_{1}^{(1)}(a x_2 (t))  u(t) - c  d \sigma^{(1)}_{2}(d (x_2(t) + x_1(t))) (x_2(t) + u(t))$. Now consider a solution with initial condition $x_2(0)=x_{20}$, and $x_1(0)=-x_{20}$ such that $\sigma_{1}^{(1)}(a x_{20})=0$. We then have $\dot{u}(0)  =  -cd \sigma^{(1)}_2(0) (x_{20} + u(0))$, whose norm is greater than $A(\vert x_{20}\vert -B)$ for some positive constants $A,B$. Thus $|\dot u(0)|$ grows unbounded as $|x_{20}|$ tends to infinity, which contradicts the definition of a $p$-bounded feedback.
\end{remark}
\vspace{3mm}

\begin{remark}
Our construction is developed for chains of integrator, but it may fails for a general linear system stabilizable by bounded inputs. Consider for instance the harmonic oscillator given by $\dot{x}_1= x_2$, $\dot{x}_2 = -x_1 + u $ and a bounded stabilizing law given by $u(t) =- \sigma ( x_2(t))$ with $\sigma \in \scal (p)$ for some integer $p$. The time derivative of $u$ verifies $\abs{\dot{u}(t)} \geq  \abs{  \sigma^{(1)}(x_2(t))}(\abs{x_1(t)} - \abs{u(t)})$, which grows unbounded as the state norm increases, thus contradicting the definition of $p$-bounded feedback.
\end{remark}
\vspace{3mm}

\section{Proof of the main result}
\label{sec:proof_main_result}
\subsection{Technical lemma}
\label{sec:tech:lem}
We start by giving a lemma that provides an upper bound of composed functions by exploiting the saturation region of the functions in $\scal (p)$.

\vspace{3mm}
\begin{lem}
\label{lem:techn}
Given $k\in\mathbb N$, let $f$ and $g$ be functions of class $C^k(\reels_{\geq 0} , \reels)$, $\sigma$ be a saturation function in $\scal (k)$ with constants ($\alpha ,L ,S , \sigma^{max}$), and $E$ and $F$ be subsets of $\reels_{\geq 0}$ such that $E \subseteq F$. Assume that
\begin{eqnarray}
\label{eq:lem:tech:absf}
& \: \abs{f(t)} > S ,& \:  \quad  \forall t \in F \backslash E,
\end{eqnarray} and there exists positive constants $M, Q_1, \ldots , Q_k$ such that
\begin{eqnarray}
\label{eq:lem:tech:fkQ}
 & \abs{f^{(k_1)}(t)} \leq Q_{k_1}, &  \quad \: \forall t \in E, \: \forall  k_1 \in \llbracket 1 , k \rrbracket , \\ \label{eq:lem:tech:gkM}
 &  \:\abs{g^{(k)}(t)} \leq M, & \quad  \forall t \in F.
\end{eqnarray} Then the $k$th-order derivative of $h : \reels_{\geq 0} \rightarrow \reels$, defined by $h(\cdot) = g(\cdot) + \sigma ( f(\cdot))$, satisfies
\begin{equation}
\label{lem:eq:est}
\abs{h^{(k)}(t)} \leq M + \sum\limits_{a=1}^{k}  \overline{\sigma}_a B_{k,a}(Q_1, \ldots , Q_{k -a+1})  ,  \quad \forall  t   \in F , 
\end{equation}
where $B_{k,a}(Q_1, \ldots , Q_{k -a+1})$ is a polynomial function of $Q_1, \ldots , Q_{k -a+1}$, and $\overline{\sigma}_a := \max_{s \in \reels} |\sigma^{(a)}(s)|$ for each $a\in \llbracket 1 , k \rrbracket$.
\end{lem}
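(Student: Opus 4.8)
The plan is to differentiate the composition $\sigma\circ f$ exactly $k$ times by means of Faà di Bruno's formula, which is precisely what produces the incomplete Bell polynomials $B_{k,a}$ featured in the statement. Since $h = g + \sigma\circ f$, linearity gives $h^{(k)} = g^{(k)} + (\sigma\circ f)^{(k)}$, and Faà di Bruno yields, for each $t$,
$$
(\sigma\circ f)^{(k)}(t) = \sum_{a=1}^{k} \sigma^{(a)}(f(t))\, B_{k,a}\big(f^{(1)}(t),\ldots,f^{(k-a+1)}(t)\big),
$$
where $B_{k,a}$ denotes the $(k,a)$-th incomplete Bell polynomial. The whole argument then reduces to bounding each summand uniformly over $t\in F$.

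The decisive step is to split $F = E \cup (F\setminus E)$ and treat the two pieces differently. On $F\setminus E$, assumption \rref{eq:lem:tech:absf} gives $\abs{f(t)} > S$, so property (iii) of Definition \ref{def:S(p)} forces $\sigma$ to be constant on a neighbourhood of $f(t)$; consequently $\sigma^{(a)}(f(t)) = 0$ for every $a\in\llbracket 1,k\rrbracket$, and the entire Faà di Bruno sum vanishes. Hence on $F\setminus E$ one simply has $\abs{h^{(k)}(t)} = \abs{g^{(k)}(t)} \leq M$ by \rref{eq:lem:tech:gkM}, which is dominated by the right-hand side of \rref{lem:eq:est} since the added sum is nonnegative. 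This is exactly why the hypotheses only control the derivatives of $f$ on the smaller set $E$: wherever $f$ leaves the saturation threshold, the saturation flattens out and its derivatives cease to contribute.

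On $E$, I would bound each factor separately. First, $\abs{\sigma^{(a)}(f(t))} \leq \overline{\sigma}_a$ directly from the definition of $\overline{\sigma}_a$. Second, because the incomplete Bell polynomials have nonnegative coefficients, the triangle inequality followed by their monotonicity in each nonnegative argument, combined with the bounds \rref{eq:lem:tech:fkQ}, yields $\abs{B_{k,a}(f^{(1)}(t),\ldots,f^{(k-a+1)}(t))} \leq B_{k,a}(Q_1,\ldots,Q_{k-a+1})$. Summing over $a\in\llbracket 1,k\rrbracket$ and adding $\abs{g^{(k)}(t)}\leq M$ gives \rref{lem:eq:est} on $E$; together with the $F\setminus E$ case this establishes it on all of $F$.

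The only genuinely delicate point I anticipate is justifying rigorously that $\sigma^{(a)}(f(t)) = 0$ on $F\setminus E$: since $\sigma\in\scal(k)$ is merely $C^{k}$ and equals $\pm\sigma^{max}$ only for $\abs{r}\geq S$, one must check that its derivatives up to order $k$ genuinely vanish there. This follows because $\sigma$ is locally constant on the open set $\{\abs{r}>S\}$ and the strict inequality $\abs{f(t)}>S$ places $f(t)$ inside that open saturation region. The remainder is bookkeeping with Faà di Bruno's formula and the nonnegativity of the Bell-polynomial coefficients.
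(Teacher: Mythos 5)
Your proof is correct and follows essentially the same route as the paper's: Faà di Bruno's formula for $(\sigma\circ f)^{(k)}$, vanishing of the derivative terms on $F\setminus E$ because $\abs{f(t)}>S$ places $f(t)$ in the saturation region, and the bound $\abs{B_{k,a}(f^{(1)},\ldots,f^{(k-a+1)})}\leq B_{k,a}(Q_1,\ldots,Q_{k-a+1})$ on $E$ via the nonnegativity of the Bell coefficients. If anything, your pointwise justification that $\sigma^{(a)}(f(t))=0$ on the open set $\{\abs{r}>S\}$ is slightly more explicit than the paper's statement that $\frac{d^k}{dt^k}\sigma(f(t))=0$ on $F\setminus E$.
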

\vspace{3mm}

\begin{proof}[Proof of Lemma \ref{lem:techn}]
The proof relies on Fa\`a Di Bruno's formula, which we recall

\vspace{3mm}
\begin{lem}[Fa\`a Di Bruno's formula, \cite{fdb}, p. 96]
\label{lem:fa_di}
Given  $k\in\mathbb N$, let $\phi\in C^{k}( \reels_{\geq 0} , \reels )$ and $\rho\in C^{k}( \reels , \reels )$. Then the $k$-th order derivative of the composite function $\rho \circ \phi$ is given by
\begin{equation}
\label{eq:faadibruno}
\frac{d^k}{dt^k}\rho(\phi(t)) =  \sum\limits_{a=1}^k \rho^{(a)} (\phi(t)) B_{k,a}\Big(\phi^{(1)}(t), \ldots , \phi^{(k-a+1)}(t)\Big),
\end{equation}
where $B_{k,a}$ is the Bell polynomial given by
\begin{align}
\label{bell}
B_{k,a}\Big(\phi^{(1)}(t), \ldots ,  \phi^{(k-a+1)}(t)\Big)\hspace{-1mm}:=\hspace{-2mm}\sum\limits_{\delta \in \pcal_{k,a}} \hspace{-1mm}c_{\delta} \hspace{-1mm}\prod\limits_{l=1}^{k-a+1} \left( \phi^{(l)}(t) \right)^{\delta_l}
\end{align}
where $\pcal_{k,a}$ denotes the set of $(k-a+1)-$tuples $\delta :=(\delta_1 , \delta_2, \ldots , \delta_{k-a+1})$  of positive integers satisfying
\begin{align*}
\delta_1 + \delta_2 + \ldots +\delta_{k-a+1} &= a,\\
 \delta_1 +2 \delta_2 + \ldots + (k-a+1) \delta_{k-a+1} &= k,
\end{align*}
and $c_{\delta}:=k!/\left(\delta_1 ! \cdots \delta_{k-a+1}! (1!)^{\delta_1} \cdots ((k-a+1)!)^{\delta_{k-a+1}}\right)$.
\end{lem}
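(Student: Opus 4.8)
The plan is to establish the formula by induction on the differentiation order $k$, reading the Bell polynomials combinatorially so that the multinomial weights $c_\delta$ appear naturally rather than being imposed by hand. For the base case $k=1$, the only admissible tuple in $\mathcal{P}_{1,1}$ is $\delta=(1)$, for which $c_\delta=1$ and $B_{1,1}(\phi^{(1)})=\phi^{(1)}$; the asserted identity then collapses to $\frac{d}{dt}\rho(\phi(t))=\rho^{(1)}(\phi(t))\phi^{(1)}(t)$, which is just the chain rule and is valid since $\rho,\phi\in C^1$.

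For the inductive step, assuming the formula holds at order $k$, I would differentiate each summand once more. Two mechanisms act on a term $\rho^{(a)}(\phi)\,B_{k,a}(\phi^{(1)},\ldots,\phi^{(k-a+1)})$: differentiating the outer factor gives $\rho^{(a+1)}(\phi)\,\phi^{(1)}$ by the chain rule, while differentiating the polynomial $B_{k,a}$ in its several arguments produces, by the ordinary product rule, a sum over $l$ of $\frac{\partial B_{k,a}}{\partial x_l}\,\phi^{(l+1)}$. Regrouping the resulting terms according to the order $a$ of the derivative of $\rho$, the coefficient of $\rho^{(a)}(\phi)$ in the $(k+1)$-th derivative is $\phi^{(1)}B_{k,a-1}+\frac{d}{dt}B_{k,a}$. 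The entire step therefore reduces to the purely algebraic claim that this coefficient equals $B_{k+1,a}$, i.e. that the explicit expression for $c_\delta$ satisfies the Bell-polynomial recurrence.

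I expect this recurrence to be the main obstacle: verifying it by a bare manipulation of the factorials in $c_\delta$ is unilluminating and error-prone. Instead, I would interpret each monomial $\prod_l (\phi^{(l)})^{\delta_l}$ weighted by $c_\delta$ as the sum, over all set partitions of $\{1,\ldots,k\}$ having exactly $\delta_l$ blocks of size $l$, of $\rho^{(a)}(\phi)\prod_{B}\phi^{(|B|)}$, where $a=\sum_l\delta_l$ is the number of blocks. This reading is justified because the number of such partitions is exactly $k!/\bigl(\prod_l \delta_l!\,(l!)^{\delta_l}\bigr)=c_\delta$, which is precisely where the multinomial coefficient comes from.

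Under this interpretation the inductive step becomes transparent. Differentiating the term attached to a partition $\pi$ of $\{1,\ldots,k\}$ either appends the new element $k+1$ as a fresh singleton block — this is the $\rho^{(a+1)}(\phi)\phi^{(1)}$ contribution, raising the block count by one — or inserts $k+1$ into one of the existing blocks of size $l$ — this is the $\frac{\partial B_{k,a}}{\partial x_l}\phi^{(l+1)}$ contribution, promoting a size-$l$ block to size $l+1$. Each partition of $\{1,\ldots,k+1\}$ is reached exactly once by these operations, as is seen by deleting $k+1$ and distinguishing whether it formed a singleton or lay in a larger block. This bijection yields the recurrence, hence the formula at order $k+1$, completing the induction.
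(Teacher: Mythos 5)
Your proposal is correct, but note that the paper contains no proof of this lemma to compare against: it is invoked as a classical result, with a pointer to \cite{fdb}, p.~96, inside the proof of Lemma~\ref{lem:techn}. What you have written is the standard self-contained combinatorial proof, and it is sound: the base case is the chain rule; the inductive step correctly isolates the algebraic recurrence $B_{k+1,a}=\phi^{(1)}B_{k,a-1}+\sum_{l}\frac{\partial B_{k,a}}{\partial x_l}\,\phi^{(l+1)}$; and the identification of $c_\delta=k!/\bigl(\prod_l \delta_l!\,(l!)^{\delta_l}\bigr)$ with the number of set partitions of $\llbracket 1,k\rrbracket$ having $\delta_l$ blocks of size $l$, combined with the delete-the-element-$k+1$ bijection (singleton versus member of a larger block), proves that recurrence cleanly — this is exactly the right way to avoid the factorial bookkeeping. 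Two details deserve to be made explicit if you write this up. First, the boundary cases of the regrouping need the convention $B_{k,0}=0$ for $k\geq 1$ (so that $a=1$ and $a=k+1$ come out as $B_{k+1,1}=\phi^{(k+1)}$ and $B_{k+1,k+1}=(\phi^{(1)})^{k+1}$ respectively); your bijective reading handles these automatically, but the ``regrouping'' sentence as stated silently uses them. Second, your partition interpretation tacitly — and correctly — allows $\delta_l=0$, which exposes a slip in the lemma as quoted in the paper: $\pcal_{k,a}$ must consist of tuples of \emph{nonnegative} integers (e.g.\ for $k=2$, $a=1$ the only solution is $\delta=(0,1)$), so ``positive'' in the statement should read ``nonnegative.'' Compared with the paper's citation, your route buys a verification that the derivative estimates of Lemma~\ref{lem:techn} rest on nothing beyond the chain and product rules, at the cost of a page of combinatorics the authors chose to outsource.
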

\vspace{3mm}

Using Lemma \ref{lem:fa_di}, a straightforward computation yield
\begin{equation*}
h^{(k)}(t) = g^{(k)}(t) + \sum\limits_{a=1}^{k} \sigma^{(a)} (f(t))  B_{k,a}\Big(f^{(1)}(t), \ldots , f^{(k-a+1)}(t)\Big).
\end{equation*} 
Since $\sigma \in \scal(k)$, \rref{eq:lem:tech:absf} ensures that the set $F\setminus E$ is contained in the saturation zone of $\sigma$. It follows that
\begin{equation}
\label{proof:lem:tech:int}
\frac{d^{k}}{dt^{k}} \sigma(f(t))=0,\quad \forall t\in F\setminus E.
\end{equation}
Furthermore, from \rref{eq:lem:tech:fkQ} and \rref{bell} it holds that, for all $t \in E$,
\begin{eqnarray*}
\abs{B_{k,a}\left( f^{(1)}(t), \ldots , f^{(k -a+1)}(t)\right)}
 &\leq &\sum\limits_{\delta \in \pcal_{k,a}} c_{\delta} \prod\limits_{l=1}^{k-a+1} Q_l^{\delta_l}, \\
& = & B_{k,a}(Q_1, \ldots ,Q_{k-a+1}).
\end{eqnarray*}
From definition of $\overline{\sigma}_a $ and \rref{eq:faadibruno}, we get that
\begin{equation}
\label{est_tech_1}
 \abs{\frac{d^{k}}{dt^k}  \sigma(f(t))} \leq  \sum\limits_{a=1}^{k} \overline{\sigma}_a B_{k,a}(Q_1, \ldots ,Q_{k -a+1}), \quad \forall t \in E.
\end{equation} 
In view of \rref{proof:lem:tech:int}, the estimate \rref{est_tech_1} is valid on the whole set $F$. Thanks to \rref{eq:lem:tech:gkM}, a straightforward computation leads to the estimate \rref{lem:eq:est}.
\end{proof}

\subsection{Intermediate results}
\label{sec:int:res}
In this subsection we provide two propositions which will be used in the proof of Theorem \ref{Main_res}. We start by introducing some necessary notation.

Given $n\in\mathbb N_{\geq 1}$ and $p\in\mathbb N$, let $ \mu_1 , \ldots , \mu_n$ be saturations in $\scal(p)$ with respective constants $(\mu_i^{max},  L_{\mu_i} ,S_{\mu_i} , \alpha_{\mu_i} ) $, $i \in \llbracket  1, n   \rrbracket$. We define, for each $i \in \llbracket  1, n   \rrbracket$,
\begin{align}
\label{def_up_bnd_et}
\overline{\mu}_{i,j} & :=  \max  \left\lbrace \abs{\mu_i^{(j)}(r)}: \:  r \in  \reels  \right\rbrace , \quad \forall j \in \llbracket  1, p  \rrbracket , \\ \label{eq:b_mui}
b_{\mu_i} & :=  \max \left\lbrace \abs{r - \mu_i(r)} : \:\abs{r} \leq S_{\mu_i} + 2 \mu_{i-1}^{max}  \right\rbrace  .
\end{align}
We also let
\begin{align} \label{def_bound_mu_n_sup}
\overline{b}_{\mu_n} & :=  \max \left\lbrace \frac{\mu_n(r)}{r} : \: 0 < \abs{r} \leq S_{\mu_n} \right\rbrace ,\\ \label{def_bound_mu_n_inf}
\underline{b}_{\mu_n} & := \min \left\lbrace \frac{\mu_n(r)}{r} : \: 0 < \abs{r} \leq S_{\mu_n} \right\rbrace . 
\end{align}
Note that these quantities are well defined since the functions $\mu_i$ are all in $\scal (p)$. 

We also make a linear change of coordinates $y = H x $, with $H\in \reels^{n,n}$, that puts System \rref{mult_int} into the form
\begin{equation}
\label{S1}
\dot{y}_i  =  \alpha_{\mu_n}  \sum\limits_{l=i+1}^n y_l + u, \quad \forall i \in \llbracket 1 , n \rrbracket ,
\end{equation}
with the convention $\sum\limits_{l=n+1}^n=0$. The matrix $H$ can be determined from
\begin{equation}
y_{n-i}=\sum\limits_{k=0}^i \frac{i!}{k!(i-k)!} \left( \alpha_{\mu_n}  \right)^k  x_{n-k}, \quad \forall i \in \llbracket 0, n-1 \rrbracket.
\end{equation} 
For this system, we define a nested saturations feedback law $\Upsilon : \reels^n \rightarrow \reels $ as
\begin{equation}
\label{fe:prop:S_1}
\Upsilon(y)=- \mu_{n} (y_n + \mu_{n-1}(y_{n-1}+ \ldots + \mu_1(y_1))\ldots ). 
\end{equation}
Let $y(\cdot)$ be a trajectory of the system
\begin{equation}
\label{S1cc}
\dot{y}_i  =  \alpha_{\mu_n}  \sum\limits_{l=i+1}^n y_l + \Upsilon(y), \quad \forall i \in \llbracket 1 , n \rrbracket ,
\end{equation}
which is the closed-loop system \rref{S1} with the feedback defined in \rref{fe:prop:S_1}. For each $i\in  \llbracket 1 , n\rrbracket$, the time function $z_i : \reels_{\geq 0} \rightarrow \reels$ is defined recursively as
\begin{equation*}
\label{s_i_def}
z_i(\cdot ): = y_i(\cdot) + \mu_{i-1}(s_{i-1}(\cdot)),
\end{equation*}
with $\mu_0(\cdot)= 0$. Notice that with the above functions, the closed loop system \rref{S1cc} can be rewritten as
\begin{equation}
\begin{cases}
\label{S1c2}
\dot{y}_i  = \alpha_{\mu_n} z_n - \mu_{n} (z_n) + \alpha_{\mu_n}  \sum\limits_{l=i+1}^{n-1}( z_l - \mu_{l} (z_l )) -\alpha_{\mu_n} \mu_i(z_i), \quad \forall i \in \llbracket 1 , n-1 \rrbracket ,\\
\dot{y}_n  =  -  \mu_{n} (z_n).
\end{cases}
\end{equation}
For $i\in  \llbracket 1 , n\rrbracket$, we also let
\begin{equation}
\label{E_i_def}
E_i  :=  \left\lbrace y \in \reels^{n} : \:  \abs{y_v } \leq S_{\mu_v} + \mu_{v-1}^{max} , \forall v \in \llbracket i, n \rrbracket  \right\rbrace ,
\end{equation}
with $\mu_{0}^{max}=0$, and 
\begin{equation}
\label{I_i_def}
 I_i :=   \lbrace t \in \reels_{\geq 0} : \: y(t) \in E_i \rbrace .
\end{equation}
Note that from the definitions of $I_i$ and $E_i$, we have $I_{1} \subseteq I_{2} \subseteq \ldots \subseteq I_n$, and a straightforward computation yields
\begin{align}
\abs{z_{i}(t)} &> S_{\mu_{i}} , \quad \forall t \in I_{i+1} \backslash I_{i}, \:   \forall i \in \llbracket 1 , n-1 \rrbracket , \label{eq:s_i_I_i_prive_I_i+1}\\
\abs{z_n(t)} &> S_{\mu_{n}} , \quad \forall t \in \reels_{\geq 0} \backslash I_{n},\label{eq:s_n_rn}
\end{align}
which allows us to determine when saturation occurs. Moreover from the definitions of saturation functions of class $\scal(p)$, $E_i$, $I_i$, \rref{def_bound_mu_n_sup} and \rref{def_bound_mu_n_inf}, the following estimates can easily be derived:
\begin{align}
\label{eq:si-sig}
\abs{z_i(t) - \mu_i(z_i(t))} &\leq b_{\mu_i} , \quad \forall t \in I_i,
\\
\abs{\alpha_{\mu_n} z_n(t) \hspace{-0.5mm}-\hspace{-0.5mm}\mu_n (z_n(t))} &\leq (\overline{b}_{\mu_n}\hspace{-0.5mm} - \hspace{-0.5mm}\underline{B}_{\mu_n}) (S_{\mu_{n}} \hspace{-0.5mm}+\hspace{-0.5mm}  2 \mu_{n-1}^{max} )  , \quad \forall t \in I_n, \label{eq:sn-sig}
\end{align}
with $\underline{B}_{\mu_n} := \min \left\lbrace \underline{b}_{\mu_n}, \frac{\mu_n^{max}}{S_{\mu_{n}} + 2 \mu_{n-1}^{max}} \right\rbrace$.

The following statement provides explicit bounds on the successive derivatives of each functions $y_i(t)$, $z_i(t)$ for each $i  \in \llbracket 1, n \rrbracket$ and the time function given by $u(\cdot)=\Upsilon(y(\cdot))$.

\vspace{3mm}
\begin{propo}
\label{prop:2}
Given $n\in\mathbb N_{\geq 1}$ and $p\in\mathbb N$, let $ \mu_1 , \ldots , \mu_n$ be saturation functions in $\scal(p)$ with respective constants $(\mu_i^{max},  L_{\mu_i} , S_{\mu_i} , \alpha_{\mu_i} ) $ for each $i  \in \llbracket 1, n \rrbracket $. With the notation introduced in this section and the Bell polynomials introduced in \eqref{bell}, every trajectory of the closed-loop system \rref{S1cc} satisfies, for each $i\in\llbracket 1,n\rrbracket$ and each $j \in \llbracket 1, p \rrbracket $,
\begin{align}
(P_1(i,j)) :\quad & \: \abs{y_i^{(j)}(t)} \leq Y_{i,j} ,\quad \forall t \in I_i \,;  \label{y_j_est}\\ 
(P_2(i,j)) :\quad & \: \abs{z_i^{(j)}(t)} \leq  Z_{i,j} , \quad \forall t \in I_i \,   ; \label{s_i_j_est}\\
(P_3(j)) :\quad & \:\sup\limits_{t\geq 0} \left\lbrace \abs{u^{(j)}(t)} \right\rbrace   \leq   \sum\limits_{q=1}^j  G_{q,j} \overline{\mu}_{n,q} \,;\label{sup_u_j_est}
\end{align}
where $Y_{i,j}$, $Z_{i,j}$, and $G_{q,j}$ are independent of initial conditions and are obtained recursively as follows: for $j=1$,
\begin{align*}
Y_{n,1} &:= \mu_n^{max}, \\
Y_{i,1} & := (\overline{b}_{\mu_n} - \underline{B}_{\mu_n})(S_{\mu_{n}} + 2 \mu_{n-1}^{max}) +  \alpha_{\mu_n} \sum\limits_{l=i+1}^{n-1} b_{\mu_l} + \alpha_{\mu_n}  \mu_i^{max},\quad \forall i \in \llbracket 1, n-1 \rrbracket ,\\
Z_{1,1} & :=  Y_{1,1}, \\
Z_{i,1} & :=  Y_{i,1} + \overline{\mu}_{i-1,j} Z_{i-1,1},\quad \forall i \in \llbracket 2, n \rrbracket , \\
G_{1,1} & := Z_{n,1}
\end{align*}
and, for each $j\in\llbracket 2, p\rrbracket$,
\begin{align*} 
Y_{i,j} & :=  \alpha_{\mu_n} \sum\limits_{b=i+1}^{n} Y_{b,j-1} + \sum\limits_{q=1}^{j-1} G_{q,j-1} \overline{\mu}_{n,q} , \quad \forall i \in \llbracket 1, n-1 \rrbracket ,\\
Z_{1,j} & :=  Y_{1,j},\\ 
Z_{i,j} & :=  Y_{i,j} + \sum\limits_{a=1}^j \overline{\mu}_{i-1,a} B_{j,a}(Z_{i-1,1}, \ldots , Z_{i-1,j-1+a}), \quad \forall i \in \llbracket 2, n \rrbracket ,\\
G_{q,j} & := B_{j,q}(Z_{n,1}, \ldots , Z_{n,j-q+1}),  \quad \forall q \in \llbracket 1, j \rrbracket .
\end{align*}
\end{propo}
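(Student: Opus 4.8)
The plan is to prove the three families of estimates $P_1(i,j)$, $P_2(i,j)$ and $P_3(j)$ simultaneously by induction on the derivation order $j\in\llbracket 1,p\rrbracket$, the workhorse being the technical Lemma~\ref{lem:techn}. At each fixed level $j$ I would proceed in three stages: first establish the bounds $P_1(i,j)$ on the $y_i^{(j)}$ for every $i$, then deduce the bounds $P_2(i,j)$ on the $z_i^{(j)}$ by an inner induction on $i$ running from $1$ to $n$, and finally derive $P_3(j)$ on $u^{(j)}$. The recursive definitions of $Y_{i,j}$, $Z_{i,j}$ and $G_{q,j}$ are precisely engineered so that the bound produced at each stage equals the quantity appearing in the statement, so the proof amounts to checking, at every step, that the hypotheses of Lemma~\ref{lem:techn} are met and matching the resulting Bell-polynomial expression \rref{bell} against the claimed recursion.

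For the first stage I would treat $j=1$ and $j\geq 2$ separately. When $j=1$, I rewrite the closed loop in the form \rref{S1c2} and bound each term on $I_i$ using the saturation estimates \rref{eq:si-sig} and \rref{eq:sn-sig} together with $\abs{\mu_i(z_i)}\leq \mu_i^{max}$; since $I_i\subseteq I_l$ for $l\geq i$ these estimates are valid on $I_i$, and summing them reproduces $Y_{i,1}$ (with $\dot y_n=-\mu_n(z_n)$ giving directly $Y_{n,1}=\mu_n^{max}$). When $j\geq 2$, I instead differentiate the original form \rref{S1}, which yields the clean identity $y_i^{(j)}=\alpha_{\mu_n}\sum_{l=i+1}^n y_l^{(j-1)}+u^{(j-1)}$; applying the level $(j-1)$ hypotheses $P_1(l,j-1)$ (valid on $I_i\subseteq I_l$) and $P_3(j-1)$ then gives exactly $Y_{i,j}$. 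Note that $y_n^{(j)}=u^{(j-1)}$, so the case $i=n$ is covered by $P_3(j-1)$ as well.

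The second and third stages are two applications of Lemma~\ref{lem:techn}. For $P_2(i,j)$, I apply the lemma to $z_i=y_i+\mu_{i-1}(z_{i-1})$ with $g=y_i$, $\sigma=\mu_{i-1}$, $f=z_{i-1}$, $E=I_{i-1}$ and $F=I_i$: hypothesis \rref{eq:lem:tech:absf} follows from $\abs{z_{i-1}(t)}>S_{\mu_{i-1}}$ on $I_i\setminus I_{i-1}$, which is \rref{eq:s_i_I_i_prive_I_i+1}, the derivative bounds $Q_l=Z_{i-1,l}$ are the already-proved $P_2(i-1,l)$ for $l\leq j$, and $M=Y_{i,j}$ is the $P_1(i,j)$ just obtained; the conclusion \rref{lem:eq:est} is then exactly $Z_{i,j}$, with the base case $z_1=y_1$ giving $Z_{1,j}=Y_{1,j}$. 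For $P_3(j)$, recalling $u=-\mu_n(z_n)$, I apply the lemma with $g\equiv 0$, $\sigma=\mu_n$, $f=z_n$, $E=I_n$ and $F=\reels_{\geq 0}$; here \rref{eq:lem:tech:absf} is \rref{eq:s_n_rn}, so that $u^{(j)}$ vanishes off $I_n$ and is controlled on $I_n$ by the $Z_{n,l}$, yielding the bound $\sum_{q=1}^j G_{q,j}\overline{\mu}_{n,q}$ valid for all $t\geq 0$.

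The crux of the argument, and the reason Lemma~\ref{lem:techn} is indispensable, is the mismatch between the set $I_{i-1}$ on which the inductive bound on $z_{i-1}^{(j)}$ is available and the larger set $I_i$ on which $z_i^{(j)}$ must be controlled. On $I_i\setminus I_{i-1}$ the inner argument $z_{i-1}$ has left the saturation threshold of $\mu_{i-1}$, so all derivatives of $\mu_{i-1}(z_{i-1})$ vanish there regardless of how large $z_{i-1}^{(j)}$ itself may be; this is exactly the phenomenon encoded by \rref{proof:lem:tech:int} in the proof of Lemma~\ref{lem:techn}. I expect the main difficulty to be purely organizational: keeping the nested inclusions $I_1\subseteq\cdots\subseteq I_n$ straight, ensuring that at each invocation every lower-order ($j'<j$) and lower-index ($i'<i$) estimate needed is genuinely already in hand, and verifying that the Bell-polynomial bookkeeping coming out of \rref{lem:eq:est} coincides termwise with the stated recursions for $Z_{i,j}$ and $G_{q,j}$.
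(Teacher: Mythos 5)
Your proposal is correct and follows essentially the same route as the paper's own proof: an outer induction on $j$ (with the base case $j=1$ handled via the rewritten dynamics \rref{S1c2} and the higher orders via differentiating \rref{S1}), an inner induction on $i$ for $P_2(i,j)$ anchored at $z_1=y_1$, and the two invocations of Lemma~\ref{lem:techn} with exactly the choices of $f$, $g$, $E$, $F$ used in the paper (your indexing $f=z_{i-1}$, $F=I_i$ is merely a shift of the paper's $f=z_i$, $F=I_{i+1}$). Your closing observation about the role of the saturation zone on $I_i\setminus I_{i-1}$ is precisely the mechanism the paper exploits through \rref{proof:lem:tech:int}.
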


\vspace{3mm}
\begin{proof}[Proof of Proposition \ref{prop:2}]

Let $y(t)$ be a trajectory of the closed loop system \rref{S1cc}. The right-hand side of \rref{S1cc} being of class $C^p( \reels^n , \reels^n )$ and globally Lipschitz, System \rref{S1cc} is forward complete and its trajectories are of class $C^{p+1}( \reels_{\geq 0} , \reels^n )$. Therefore the successive time derivatives of $y_i(t)$, $z_i(t)$, and $u(t)$ are well defined.

We establish the result by induction on $j$. We start by $j=1$. We begin to prove that $P_1(i,1)$  holds for all $i \in \llbracket 1, n \rrbracket $. Let $i \in \llbracket 1, n-1 \rrbracket $. From \rref{S1c2}, \rref{eq:si-sig}, and \rref{eq:sn-sig} a straightforward computation leads to
\begin{equation*}
\abs{ \dot{y}_i(t) } \leq  (\overline{b}_{\mu_n} - \underline{B}_{\mu_n})(S_{\mu_{n}} + 2 \mu_{n-1}^{max})+   c \sum\limits_{l=i+1}^{n-1} b_{\mu_l} + c  \mu_i^{max}, 
\end{equation*}
for all $t \in I_{i+1}$. Since $I_{i} \subseteq I_{i+1}$, the above estimate is still true on $I_i$. Moreover, from \rref{S1c2} it holds that $\abs{ \dot{y}_n(t) } \leq \mu_n^{max}$ at all positive times. $P_1(i,1)$ has been proven  for each $i \in \llbracket 1, n \rrbracket $. 

We now prove by induction on $i$ the statement $P_2(i,1)$. Since $z_1(\cdot )=y_1(\cdot )$, the case $i=1$ is done. Assume that, for a given $i \in \llbracket 1, n-1 \rrbracket $, the statement $P_2(i_2, 1)$ holds for all $i_2 \leq i$. From Lemma \ref{lem:techn} (with $k=1$, $f=z_i$, $g=y_{i+1}$, $h=z_{i+1}$, $\sigma = \mu_i$, $Q_1= Z_{i,1}$, $M = Y_{i+1,1}$, $\overline{\sigma}_1 = \overline{\mu}_{i,1}$, $E=I_i$, $F=I_{i+1}$, and \rref{eq:s_i_I_i_prive_I_i+1}), we can establish that $P_2(i+1,1)$ holds. Thus $P_2(i,1)$ holds for all $i \in \llbracket 1, n \rrbracket $.

Notice that $u(\cdot)=- \mu_n(z_{n}(\cdot))$. We then can establish $P_3(1)$ from Lemma \ref{lem:techn} (with $k=1$, $f=z_n$, $g\equiv 0$, $h=u$, $\sigma = \mu_n$, $Q_1= Z_{1,i}$, $M = 0$, $\overline{\sigma}_1 = \overline{\mu}_{n,1}$, $E=I_n$, $F=\reels_{\geq 0}$ and \rref{eq:s_n_rn}). This ends the case $j=1$.

Assume that for a given $j \in \llbracket 1, p-1 \rrbracket$, statements $P_1(i,j_2)$, $P_2(i,j_2)$ and $P_3(j_2)$ hold for all $j_2 \leq j$ and all $i \in \llbracket 1, n \rrbracket $. Let $i \in \llbracket 1, n \rrbracket $. From \rref{S1}, a straightforward computation yields
\begin{eqnarray*}
\abs{ y_i^{(j+1)}(t) }  \leq  \alpha_{\mu_n} \sum\limits_{l=i+1}^{n} \abs{ y_l^{(j)}(t) } + \abs{u^{(j)}(t)}, \quad \forall t \geq 0.
\end{eqnarray*}
From $P_3(j)$, $P_1(i+1,j), \ldots , P_1(n,j) $, we obtain that  
\begin{eqnarray*}
\abs{ y_i^{(j+1)}(t) }  \leq  \alpha_{\mu_n}  \sum\limits_{l=i+1}^{n} Y_{l,j} + \sum\limits_{q=1}^j  G_{q,j}\overline{\mu}_{n,q} , \quad \forall t \geq I_i.
\end{eqnarray*}
Thus the statement $P_1(j+1,i)$ is proven for all $i \in \llbracket 1, n \rrbracket $.

We now prove by induction on $i$ the statement $P_2(i,j+1)$. As before, since $z_1=y_1$, the case for $i=1$ is done. Assume that for a given $i \in \llbracket 1, n-1 \rrbracket$, the statement $P_2(i_1, j+1)$ holds for all $i_1 \leq i$. From Lemma \ref{lem:techn} (with $k=j+1$, $f=z_i$, $g=y_{i+1}$, $h=z_{i+1}$, $\sigma = \mu_i$, $Q_{k_1}= Z_{i,k_1}$, $M = Y_{i+1,j+1}$, $\overline{\sigma}_a = \overline{\mu}_{i,a}$, $E=I_i$, $F=I_{i+1}$, and \rref{eq:s_i_I_i_prive_I_i+1}), we can establish that $P_2(i+1,j+1)$ holds. $P_2(i,j+1)$ is thus satisfied for all $i \in \llbracket 1, n \rrbracket $.

Finally, we can establish $P_3(j+1)$ from Lemma \ref{lem:techn} (with $k=j+1$, $f=z_n$, $g\equiv 0$, $h=u$, $\sigma = \mu_n$, $Q_{k1}= Z_{n,k_1}$, $M = 0$, $\overline{\sigma}_a = \overline{\mu}_{n,a}$, $E=I_n$, $F=\reels_{\geq 0}$ and \rref{eq:s_n_rn}). This ends the proof of Proposition \ref{prop:2}.
\end{proof}

\vspace{3mm}
We next provide sufficient conditions on the parameters of the saturation functions in $\scal (p)$ guaranteeing global asymptotic stability of the closed-loop system \rref{S1cc}.

\vspace{3mm}
\begin{propo}
\label{prop:stab_S1}
Given $n\in\mathbb N$ and $p\in\mathbb N$, let $ \mu_1 , \ldots , \mu_n$ be saturation functions in $\scal(p)$ with respective constants $(\mu_i^{max},  l_{\mu_i}^{l} , l_{\mu_i}^{s}, \alpha_{\mu_i} ) $ for each $i\in\llbracket 1,n \rrbracket$ and assume that, for all $i \in \llbracket 1 , n-1  \rrbracket$,
\begin{subequations}\label{cond}
\begin{align} 
 \alpha_{\mu_i}  &= 1, \label{cond1} \\
 \mu_i^{max}  &<  L_{\mu_{i+1}} /2. \label{cond2}
\end{align}
\end{subequations}
Then the origin of the closed-loop system \rref{S1cc} is globally asymptotically stable.
\end{propo}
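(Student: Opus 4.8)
The plan is to argue that the nested saturations \emph{deactivate} one after another, from the outermost $\mu_n$ to the innermost $\mu_1$, so that after a finite time the closed loop \rref{S1cc} coincides with a linear time-invariant system, and then to verify that this terminal linear system is Hurwitz. Throughout I write $\alpha:=\alpha_{\mu_n}$. Forward completeness and regularity of the trajectories are already available (global Lipschitzness of the right-hand side of \rref{S1cc}), so every quantity below is well defined.

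First I would isolate the \textbf{terminal linear regime}, which is the structural part and which explains conditions \rref{cond}. Suppose the trajectory lies where every saturation acts linearly, i.e. $\abs{z_i}\le L_{\mu_i}$ for all $i$. Then $\mu_i(z_i)=\alpha_{\mu_i}z_i$, and \rref{cond1} gives $\mu_i(z_i)=z_i$ for $i\in\llbracket 1,n-1\rrbracket$, while $\mu_n(z_n)=\alpha z_n$. Consequently $z_i=y_i+z_{i-1}=\sum_{l=1}^i y_l$, and substituting $u=-\mu_n(z_n)=-\alpha\sum_{l=1}^n y_l$ into \rref{S1} yields $\dot y_i=-\alpha\sum_{l=1}^i y_l$ for every $i$; that is, $\dot y=-\alpha M y$, where $M$ is the lower-triangular matrix with $M_{il}=1$ for $l\le i$ and $0$ otherwise. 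Since $M$ is triangular with unit diagonal, all its eigenvalues equal $1$, so $-\alpha M$ has all eigenvalues equal to $-\alpha<0$ and is Hurwitz. Hence there is a quadratic Lyapunov function whose sufficiently small sublevel sets lie inside $\{\abs{z_i}\le L_{\mu_i},\ \forall i\}$ and are forward invariant and exponentially attractive for \rref{S1cc}. This already gives local (indeed exponential) asymptotic stability of the origin, so it only remains to prove global attractivity by driving every trajectory into such a sublevel set in finite time.

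The core of the argument is the \textbf{desaturation cascade}. For the outermost level I would use $V_n:=y_n^2/2$ together with the top equation $\dot y_n=-\mu_n(z_n)$ of \rref{S1c2}. Since $z_n=y_n+\mu_{n-1}(z_{n-1})$ with $\abs{\mu_{n-1}(z_{n-1})}\le\mu_{n-1}^{max}$, whenever $\abs{y_n}>2\mu_{n-1}^{max}$ the argument $z_n$ has the same sign as $y_n$ and $\abs{z_n}\ge\abs{y_n}/2\ge\mu_{n-1}^{max}$, so $\tfrac{d}{dt}\abs{y_n}=-\abs{\mu_n(z_n)}\le -c_0$ with $c_0:=\min\{\abs{\mu_n(r)}:\mu_{n-1}^{max}\le\abs{r}\le S_{\mu_n}\}>0$. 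Thus $\abs{y_n}$ reaches the slab $\{\abs{y_n}\le 2\mu_{n-1}^{max}\}$ in finite time and, by the same sign computation at the boundary, never leaves it; crucially \rref{cond2} gives $2\mu_{n-1}^{max}<L_{\mu_n}$, so this slab sits strictly inside the linear window of $\mu_n$. I would then propagate this inward: assuming levels $n,\dots,i+1$ are trapped in their slabs, show that level $i$ is driven into $\{\abs{z_i}\le L_{\mu_i}\}$, again using \rref{cond2} so that the residual inner contribution $\mu_{i-1}^{max}$ is dominated by $L_{\mu_i}$, and using \rref{eq:s_i_I_i_prive_I_i+1}, \rref{eq:s_n_rn} to locate the saturation events together with the nesting $I_1\subseteq\cdots\subseteq I_n$.

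I expect this inductive step to be the \textbf{main obstacle}. A naive Lyapunov function such as $z_i^2/2$ need not decrease monotonically, because the sign of the argument $z_{i+1}$ of the next saturation, which enters the dynamics of the lower levels through the coupling in \rref{S1c2}, is not controlled once $y_{i+1}$ sits at the unfavourable end of its slab; the difficulty is already visible for $n=2$, where $\dot y_1=\alpha y_2-\mu_2(z_2)$ can take either sign for large $\abs{y_1}$. The resolution is that the outer states are themselves moving and self-correct this sign, and I would make this precise through a trapping/invariance argument of LaSalle--Barbalat type, showing that the trajectory cannot avoid entering the next region $E_i$ in finite time, so that the slabs shrink step by step down to the fully linear regime. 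Once the trajectory reaches a forward-invariant sublevel set of the terminal linear Lyapunov function, global attractivity follows, and combined with the local stability established above this yields global asymptotic stability of the origin of \rref{S1cc}.
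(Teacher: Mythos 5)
Your overall strategy (sequential desaturation from the outermost level inward, then a terminal linear regime) is the same as the paper's, and your analysis of the terminal regime is correct and even more explicit than the paper's one-line claim: the closed loop in the fully linear region is indeed $\dot y=-\alpha_{\mu_n}My$ with $M$ lower-triangular all-ones, hence Hurwitz. But the heart of your proof --- the inductive propagation step --- is not actually proven: you flag it as the ``main obstacle,'' exhibit a sign ambiguity (for $n=2$, ``$\dot y_1=\alpha y_2-\mu_2(z_2)$ can take either sign''), and defer to an unspecified ``trapping/invariance argument of LaSalle--Barbalat type.'' This obstacle is an artifact of not exploiting the structure of the coordinates \rref{S1}: since every equation carries the \emph{same} tail $\alpha_{\mu_n}\sum_{l=i+1}^n y_l+u$, once $\mu_n,\dots,\mu_{i+1}$ operate linearly (with slope $1$ for the inner ones by \rref{cond1}) one has $u=-\alpha_{\mu_n}\bigl(\sum_{l=i+1}^n y_l+\mu_i(z_i)\bigr)$, so the coupling cancels \emph{exactly} and $\dot y_i=-\alpha_{\mu_n}\mu_i(z_i)$, a decoupled scalar equation whose argument is $y_i$ plus a perturbation bounded by $\mu_{i-1}^{max}<L_{\mu_i}/2$. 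In your $n=2$ example, after $\mu_2$ desaturates, $\dot y_1=\alpha y_2-\alpha(y_2+\mu_1(y_1))=-\alpha\mu_1(y_1)$: the sign problem disappears. The paper's proof is precisely this: the base-step Lyapunov argument with $V_i=y_i^2/2$ repeats verbatim at each level, and no LaSalle/Barbalat machinery is needed. Without this cancellation, your proposal leaves the inductive step as a declared intention rather than an argument.

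There is also a quantitative slip in your base step. You trap $|y_n|$ in the slab $\{|y_n|\leq 2\mu_{n-1}^{max}\}$ and claim that \rref{cond2} (i.e.\ $2\mu_{n-1}^{max}<L_{\mu_n}$) places this slab ``strictly inside the linear window of $\mu_n$.'' But the linear window constrains the \emph{argument} $z_n=y_n+\mu_{n-1}(z_{n-1})$, and on your slab one only gets $|z_n|\leq 3\mu_{n-1}^{max}$, which \rref{cond2} does not force below $L_{\mu_n}$ (take $\mu_{n-1}^{max}=0.45\,L_{\mu_n}$: then $|z_n|$ may reach $1.35\,L_{\mu_n}$, so $\mu_n$ need not be linear there). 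The correct slab is the paper's $\{|y_n|\leq L_{\mu_n}/2\}$, which gives $|z_n|\leq L_{\mu_n}/2+\mu_{n-1}^{max}<L_{\mu_n}$; your finite-time-decrease argument (which only needs $|y_n|$ strictly above $\mu_{n-1}^{max}$ to fix the sign of $z_n$) adapts to this threshold without difficulty, so this part is repairable, unlike the missing induction.
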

\vspace{3mm}

Actually the above proposition is almost the same as the one given in \cite{Teel92}, except that we allow the first level of saturation $\mu_n$ to have a slope different from $1$. 

\vspace{3mm}
\begin{proof}[Proof of Proposition \ref{prop:stab_S1}]

We prove that after a finite time any trajectory of the closed-loop system \rref{S1cc} enters a region in which the feedback \rref{fe:prop:S_1} becomes simply linear.

To that end, we consider the Lyapunov function candidate $V_n:= \frac{1}{2} y_n^2$. Its derivative along the trajectories of \rref{S1cc} reads
\begin{equation*}
\dot{V}_n= -y_n  \mu_n (y_n + \mu_{n-1}(z_{n-1})).
\end{equation*}
From \rref{cond2}, we can obtain that for all $\abs{y_n} \geq L_{\mu_{n}}/2$
\begin{equation}
\label{lyap}
\dot{V}_n \leq  - \theta L_{\mu_{n}}/2 ,
\end{equation}where $\theta= \inf\limits_{ r \in  [L_{\mu_n}/2 - \mu_{n-1} , S_{\mu_n} ]} \left\lbrace \mu_n (r) \right\rbrace $.

We next show that there exists a time $T_1 \geq 0$ such that $ \abs{y_n(t)} \leq L_{\mu_{n}}/2 $, for all $t \geq T_1$.  To prove that we have the following alternatives : either for every $t \geq 0 $, $\abs{y_n(t)} \leq L_{\mu_n} /2 $ and we are done, or there exist $T_0 \geq 0$ such that $\abs{y_n(T_0)} > L_{\mu_n} /2 $. In that case there exists $ \tilde{T}_0 \geq T_
0 $ such that $y_n(\tilde{T}_0)=L_{\mu_n} /2$ (otherwise thanks to \rref{lyap}, $ V_n(t) \rightarrow - \infty $ as $t \rightarrow \infty $ which is impossible). Due to \rref{lyap}, we have  $\abs{y_n(t)} < L_{\mu_n} /2$ in a right open neighbourhood of $\tilde{T}_0$. Suppose that there exists a positive time $\tilde{T}_1 > \tilde{T}_0$ such that $\abs{y_n(\tilde{T}_1)} \geq L_{\mu_{n}}/2 $. Then by continuity, there must exists $\tilde{T}_2 \in ( \tilde{T}_0 , \tilde{T}_1]$ such that 
$\abs{y_n(\tilde{T}_2)}= L_{\mu_n}/2$, and $\abs{y_n(t)} < L_{\mu_n}/2$ for all $t \in ( \tilde{T}_0 , \tilde{T}_2)$.
However, it then follows from \rref{lyap} that for a left open neighbourhood of $\tilde{T}_2$ we have $\abs{y_n(t)}> \abs{y_n(\tilde{T}_2)}=L_{\mu_n}/2$. This is a contradiction with the fact that on a right open neighbourhood of $\tilde{T}_0$ w have $\abs{y_n(t)} < L_{\mu_n} /2$. 
Therefore, for every $\tilde{T}_1 > \tilde{T}_0$, one has $\abs{y_n(\tilde{T}_1)} < L_{\mu_n} /2$ and the claim is proved.

It follows from \rref{cond2} that
\begin{equation*}
\abs{y_n(t) + \mu_{n-1}(z_{n-1} (t)
)} 
\leq  L_{\mu_n}, \quad \forall t \geq T_1. 
\end{equation*}
Therefore $\mu_n$ operates in its linear region after time $T_1$. Similarly, we now consider $V_{n-1}:= \frac{1}{2} y_{n-1}^2$, whose derivative along the trajectories of \rref{S1cc} satisfies
\begin{equation*}
\dot{V}_n= -  \alpha_{\mu_n}  y_{n-1} \mu_{n-1}\big(y_{n-1}+ \mu_{n-2}(y_{n-2}+ \ldots)\big), \quad \forall t \geq T_1 .
\end{equation*}
Reasoning as before and invoking \rref{cond2}, there exists a time $T_2 >0$ such that $\abs{y_{n-1}(t)} \leq L_{\mu_{n-1}}/2$ and $\mu_{n-1}$ operates in its linear region for all $t \geq T_2$.

By repeating this procedure, we construct a time $T_n$ such that for all times greater than $T_n$ the whole feedback law becomes linear. That is $$\Upsilon(y(t)) = -\alpha_{\mu_n} ( y_n(t) + \ldots + y_1(t) ),$$ for all $t \geq T_n$. System \rref{S1cc} becomes simply linear and its local exponential stability follows readily. Thus the origin of System \rref{S1cc} is globally asymptotically stable, which concludes the proof of Proposition \ref{prop:stab_S1}.
\end{proof}

\subsection{Proof of Theorem \ref{Main_res}}
\label{sec:proof_th}

We now proceed to the proof of Theorem \ref{Main_res} by explicitly constructing the vectors $k_1, \ldots , k_n $ and the constants $a_1 , \ldots , a_n$. This proof can thus be used as an algorithm to compute the nested feedback proposed in Theorem \ref{Main_res}.

Given $p\in\mathbb N$ and $n\in\mathbb N_{\geq 1}$, let $ \sigma_i $ be saturation functions in $\scal (p)$ with constants $( \sigma_i^{max} , L_{\sigma_i} , S_{\sigma_i}, \alpha_{\sigma_i} )$ for each $i \in \llbracket 1 , n \rrbracket$, and let $(R_j)_{0 \leq j \leq p}$ be a family of positive constants. We let 
\begin{align*}
\underline{R} &:= \min \{ R_j : j \in \llbracket 1 , p \rrbracket  \} ,\\
\overline{\sigma}_{n,j} & :=  \max\limits_{r \in  \reels} \left\lbrace \abs{\sigma_n^{(j)}(r)} \right\rbrace, \quad   \forall j \in \llbracket  1, p  \rrbracket ,\\
\alpha_{\tilde{\mu}} & := R_0 L_{\sigma_n} \alpha_{\sigma_n} / \sigma_n^{max} , \\
\tilde{\mu}_{n,j} & :=  \frac{R_0 \overline{\sigma}_{n,j} (L_{\sigma_n})^j }{\sigma_n^{max}},    \quad\forall j \in \llbracket  1, p  \rrbracket ,\\
\overline{b}_{\sigma_n} & := \max \left\lbrace \frac{\sigma_n(r)}{r} : \: 0 < \abs{r} \leq S_{\sigma_n} \right\rbrace , \\
\underline{b}_{\sigma_n} & := \min \left\lbrace \frac{\sigma_n(r)}{r} : \: 0 < \abs{r} \leq S_{\sigma_n} \right\rbrace .  
\end{align*}
Note that all these quantities are well defined since $\sigma_n\in\scal(p)$. We first construct saturations $\mu_1 , \ldots \mu_n$ in order to use results in Section \ref{sec:int:res}. 
Let $(\mu_i^{max})_{1 \leq i \leq n-1}$ and $(L_{\mu_i})_{1 \leq i \leq n-1}$ be two sets of positive constants such that
\begin{equation}
\label{choix_cst_1}
\mu_{n-1}^{max} <  \frac{ 1  }{2}, \quad  L_{\mu_{n-1}}  =   \frac{  \mu_{n-1}^{max} L_{\sigma_{n-1}} \alpha_{\sigma_{n-1}} }{  \sigma_{n-1}^{max}  },
\end{equation} 
and, for each $i \in \llbracket 1, n-2 \rrbracket$,
\begin{equation}
\label{choix_cst_2}
\mu_i^{max} <  \frac{ 1   }{2} L_{\mu_{i+1}} ,\quad  L_{\mu_{i}}  =   \frac{  \mu_{i}^{max} L_{\sigma_{i}} \alpha_{\sigma_{i}} }{  \sigma_{i}^{max}  }.
\end{equation}
For each  $i \in \llbracket 1, n-1 \rrbracket$, the saturation function $\mu_i \in \scal (p)$ with constants ($\mu_i^{max}$, $ L_{\mu_i} $, $S_{\mu_i}$, $1 $), where $ S_{\mu_i} =   S_{\sigma_i}  L_{\mu_i}/ L_{\sigma_i}  $, is then given by
\begin{align*}
\mu_i (s) := \frac{ \mu_i^{max} }{ \sigma_i^{max} } \sigma_i \left( s \frac{ L_{\sigma_i} }{ L_{\mu_i}}   \right),\quad \forall s\in\mathbb R.
\end{align*}
For $\lambda \geq 1$, to be chosen later, we define the saturation function $\mu_n \in \scal (p)$, with constants $\mu_n^{max}=R_0$,  $L_{\mu_n}=\lambda $, $S_{\mu_n}= S_{\sigma_n}  \lambda / L_{\sigma_n}$, and $\alpha_{\mu_n}= \alpha_{\tilde{\mu}} / \lambda$, by
\begin{align*}
\mu_n (s)  :=  \frac{R_0 }{ \sigma_n^{max} } \sigma_n \left( s \frac{ L_{\sigma_n} }{\lambda}   \right), \quad \forall s\in\mathbb R.
\end{align*} 

From \rref{choix_cst_1} and \rref{choix_cst_2} we can establish that the functions $\mu_1 , \ldots , \mu_n $ satisfy conditions \rref{cond}. It follows from Proposition \ref{prop:stab_S1} that the nested feedback law $\Upsilon(y)$ defined in \rref{fe:prop:S_1} stabilizes globally asymptotically the origin of \rref{S1}. 

We next choose $\lambda$ in such a way that $\Upsilon(y)$ is a $p$-bounded feedback law by $(R_j)_{0 \leq j \leq p}$ for System \rref{S1}. To that end, first notice that 
\begin{eqnarray}
\overline{b}_{\mu_n} & = & \frac{ \alpha_{\tilde{\mu}}  \overline{b}_{\sigma_n} }{\lambda \alpha_{\sigma_n}} , \label{pr:th:lab1} \\
 \underline{b}_{\mu_n} & = & \frac{ \alpha_{\tilde{\mu}} \underline{b}_{\sigma_n} }{\lambda \alpha_{\sigma_n}}, \label{pr:th:lab2} \\
\underline{B}_{\mu_n} & = & \frac{1}{\lambda} \min \left\lbrace \frac{ \alpha_{\tilde{\mu}}  \underline{b}_{\sigma_n} }{\alpha_{\sigma_n}} , \: \frac{R_0}{ S_{\sigma_n} / L_{\sigma_n} + 2 \sigma_{n-1}^{max} / \lambda} \right\rbrace , \label{pr:th:lab3} \\
\overline{\mu}_{n,q} & = &  \tilde{\mu}_{n,q} / \lambda^q , \label{pr:th:lab4}
\end{eqnarray}
where $\overline{b}_{\mu_n}$, $\underline{b}_{\mu_n}$, and $\overline{\mu}_{n,q}$ are defined in \rref{def_bound_mu_n_sup},  \rref{def_bound_mu_n_inf}, and \rref{def_up_bnd_et} respectively. Using Proposition \ref{prop:2}, it follows that every trajectory of the closed -loop system \rref{S1cc} satisfies, for each $j\in\llbracket 1, p\rrbracket$,
\begin{equation}
\label{eq:uj_est_pr_th1}
\sup\limits_{t\geq 0} \left\lbrace \abs{u^{(j)}(t)} \right\rbrace \leq   \sum\limits_{q=1}^j   G_{q,j} \frac{\tilde{\mu}_{n,q} }{\lambda^q }.
\end{equation}

By substituting \rref{pr:th:lab1}, \rref{pr:th:lab2}, \rref{pr:th:lab3}, and \rref{pr:th:lab4} into the recursion in Proposition \ref{prop:2}, it can be seen that, for each  $j\in \llbracket  1,p \rrbracket $, $\sum\limits_{q=1}^j  \frac{ G_{q,j} \tilde{\mu}_{n,q} }{\lambda^q }= \frac{1}{\lambda} P( \frac{1}{\lambda} )$ where $P$ is a polynomial with positive coefficients. This sum is thus decreasing in $\lambda$. Hence, we can pick $\lambda\geq 1$ in such a way that
\begin{equation*}
 \sum\limits_{q=1}^j  \frac{ G_{q,j} \tilde{\mu}_{n,q} }{ \lambda^q } \leq \underline{R}, \quad\forall j \in \llbracket 1, p \rrbracket.
\end{equation*}
It follows that, for each  $j\in\llbracket1 , p\rrbracket$, 
\begin{equation*}
\sup\limits_{t\geq 0} \left\lbrace \abs{u^{(j)}(t)} \right\rbrace \leq   \underline{R} \leq R_j.
\end{equation*}
Recalling that the feedback $\Upsilon$ is bounded by $R_0$, we conclude that it is $p$-bounded feedback law by $(R_j)_{0 \leq j \leq p}$ for System \rref{S1}.

With the linear change $y=Hx$, the closed-loop system \rref{S1cc} can be put into the form of the closed-loop system \rref{mult_int} with $u=\Upsilon(Hx)$. Thus, the sought feedback law $\nu$ of Theorem \ref{Main_res} is obtained by $\nu(x)=\Upsilon(Hx)$. This leads to the following choices of parameters:
\begin{align*}
a_n &= R_0 / \sigma_n^{max},\\
a_i & = \frac{L_{\sigma_{i+1}} \mu_i^{max} }{L_{\mu_{i+1}}  \sigma_i^{max}}, \quad \forall i \in  \llbracket 1, n-1 \rrbracket , \\
k_n^T x &=  \frac{L_{\sigma_n} }{ L_{\mu_{n}} } x_n, \\
k_{n-i}^T x &=  \frac{L_{\sigma_{n-i}}}{L_{\mu_{n-i}} } \sum\limits_{k=0}^i \frac{i!}{k!(i-k)!} \left( \frac{\alpha_{\tilde{\mu}}}{ L_{\mu_{n}} } \right)^k  x_{n-k}, \quad \forall i \in  \llbracket 1, n-1 \rrbracket ,
\end{align*}
and $L_{\mu_{n}}=\lambda$.

\section{Simulation}

In this section, we illustrate the applicability and the performance of the proposed feedback on a particular example. We use the procedure described in Section \ref{sec:proof_th} in order to compute a $2$-bounded feedback law by $(2,20,18)$ for the multiple integrator of length three. Our set of saturation functions is $\sigma_1 = \sigma_2 = \sigma_3 = \sigma$ where $\sigma$ is an $\scal(2)$ saturation function with constants $(2,1,2,1)$ given by
\begin{equation*}
\sigma (r) := \left\lbrace \begin{array}{l l}
r &   \quad \text{if } \abs{r} \leq 1, \\
h_1(r) &\quad \text{if } 1 \leq \abs{r} \leq 1.5,\\
h_2(r) &  \quad \text{if } 1.5 \leq \abs{r} \leq 2,\\
2  \sign (r) & \quad  \text{otherwise,} \
\end{array} \right.
\end{equation*}
with $h_1$ and $h_2$ were picked in order to ensure sufficient smoothness for $\sigma$: 
\begin{align*}
h_1(r) &:= \sign (r) ( -4 + 15 \abs{r} - 18 r^2 + 10 \abs{r}^3 - 2r^4 ),\\
h_2(r)&:= 2 \sign (r) ( 25 - 60 \abs{r} + 54 r^2 -21 \abs{r}^3 + 3 r^4 ).
\end{align*}
In accordance with \rref{choix_cst_1} and \rref{choix_cst_2}, we choose $\mu_2^{max} = 2/5$, $L_{\mu_2} = 1/5$, $\mu_1^{max} = 1/12$, and $L_{\mu_1} = 1/24$. Following the procedure, we obtain that 
\begin{eqnarray*}
\sup\limits_{t\geq 0} \left\lbrace \abs{u^{(1)}(t)} \right\rbrace & \leq &(7.91 + 4.35 \lambda )/ \lambda^2 , \\
\sup\limits_{t\geq 0} \left\lbrace \abs{u^{(2)}(t)} \right\rbrace & \leq  & \frac{26.2 \lambda^3 + 396 \lambda^2+1147.2 \lambda +125.2}{\lambda^4}.
\end{eqnarray*}
Choosing $\lambda = 6.5$, we obtain that $\sup\limits_{t\geq 0} \left\lbrace \abs{u^{(1)}(t)} \right\rbrace  \leq 0.9 $, and  $\sup\limits_{t\geq 0} \left\lbrace \abs{u^{(2)}(t)} \right\rbrace  \leq 18 $. The desired feedback is then given by
\begin{align*}
\nu(x) = &  - \sigma \Big( \frac{1}{6.5} \Big( x_3 + \frac{1}{5} \sigma \big(5 ( x_2/6.5 + x_3 \\
&  + \frac{1}{24} \sigma \big( 24 ( x_3 + 2x_2/6.5 + x_1/6.5^2 )) \big)\big)\Big)\Big).
\end{align*}

This feedback law was tested in simulations. The results are presented In Figure \ref{figua}. Trajectories of the multiples integrator of length $3$ with the above feedback are plotted in grey for several initial conditions. The corresponding values of the control law and its time derivatives up to order $2$ are shown in Figure \ref{figub}. These grey curves validate the fact that asymptotic stability is reached and that the control feedback magnitude, and two first derivatives, never overpass the prescribed values $(2,20,18)$. In order to illustrate the behaviour of one particular trajectory, the specific simulations obtained for initial condition  $x_{10}=446.7937$, $x_{20} = -69.875$ and $x_{30}=11.05$ are highlighted in bold black.

It can be seen from Figure \ref{figub} that our procedure shows some conservativeness the amplitude of the second derivative of the feedback never exceeds the value $2$, although maximum value of $18$ was tolerated.
\label{sec:simu}
\begin{figure}[thpb]
      \centering
      \includegraphics[height=15cm, width=\textwidth]{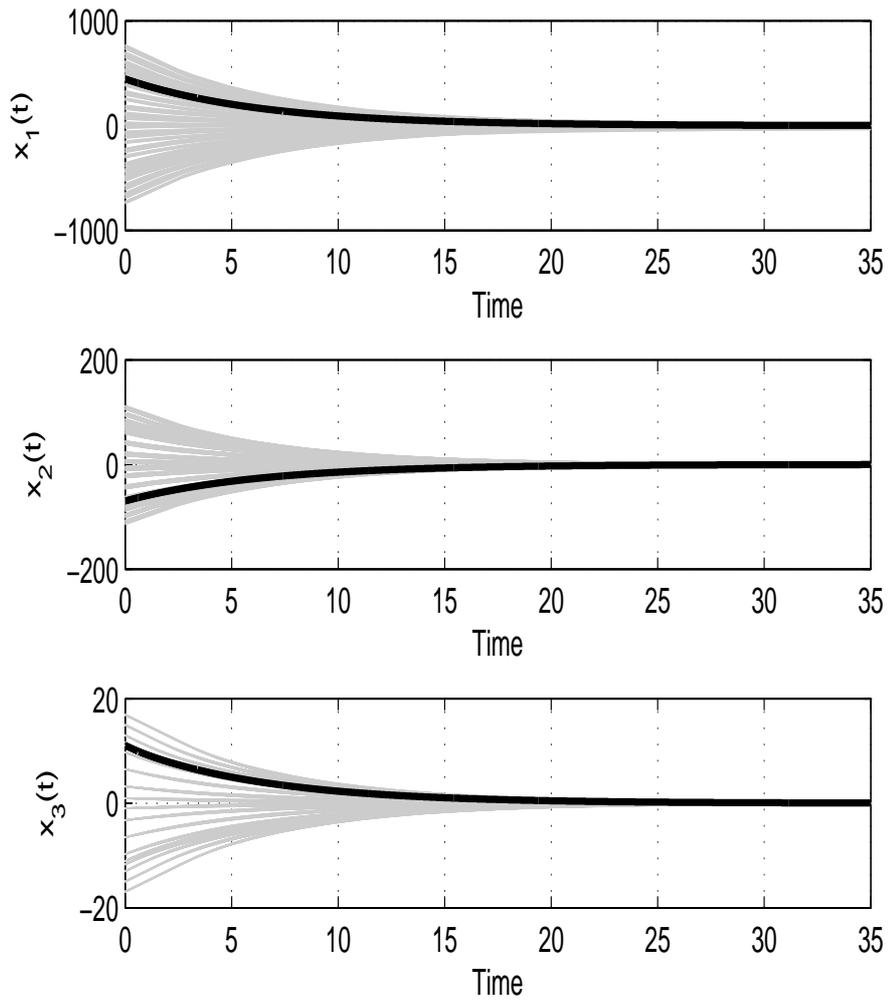}
      \caption{Evolution of the states for a set of initial conditions.}
      \label{figua}
      \end{figure}
      
\begin{figure}[thpb]
      \centering
      \includegraphics[height=15cm, width=\textwidth]{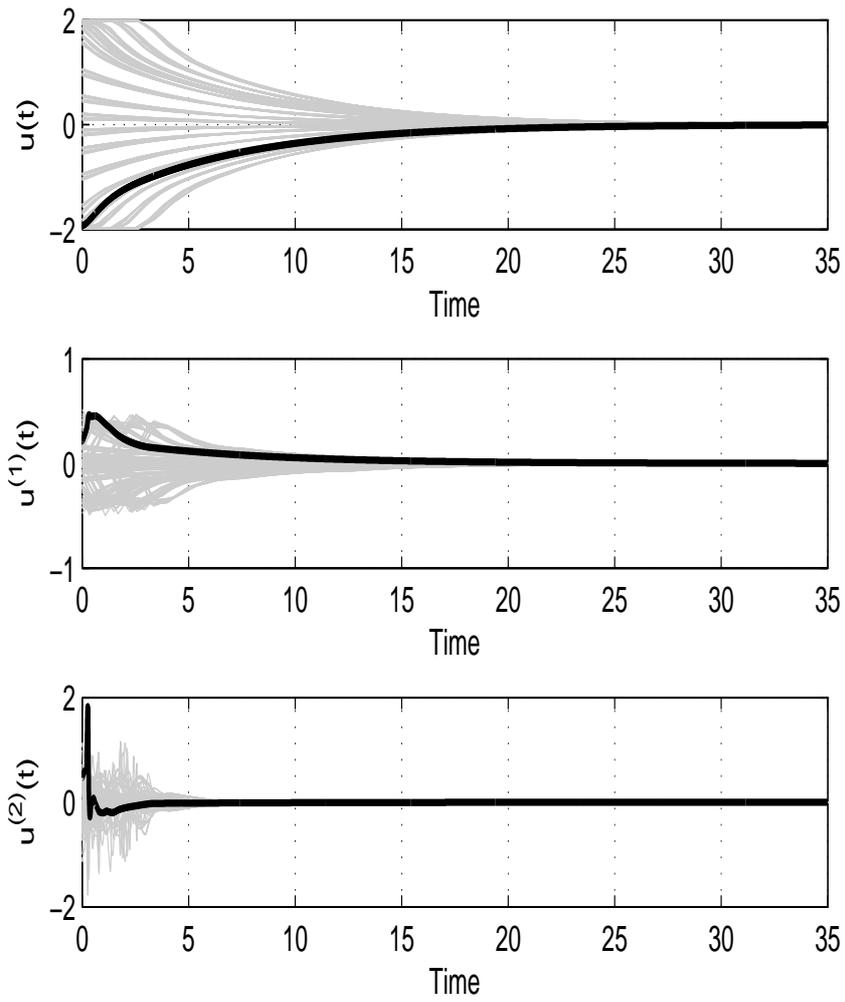}
      \caption{Evolution of the control and its derivative up to order 2 for the same set of initial conditions.}
      \label{figub}
\end{figure}

\section{Conclusion}\label{sec: conclusion}

We have shown that any chain of integrators can be globally asymptotically stabilized by a static feedback whose magnitude and $p$ first time derivatives are below arbitrary prescribed values, uniformly with respect to all trajectories of the closed loop system. The design of this feedback
relies on the technique of nested saturations first introduced in \cite{Teel92}. The applicability of the design procedure and the performance of the resulting closed-loop system was tested on a particular example.

The following two problems can be considered for future works: $i)$ extending this result from integrator chains to general linear systems stabilizable by bounded input; $ii)$ designing a $C^\infty$ bounded feedback, all the successive derivatives of which stand below prescribed constants at all times.

\bibliographystyle{IEEEtran}
\bibliography{IEEEabrv,biblio}
\end{document}